\newif\ifgreyprint
\newtheoremstyle{jamiestyle}%
  {4pt}%
  {0pt}%
  {\it}%
  {0pt}%
  {\sc}%
  {.}%
  { }%
  {}%
\theoremstyle{jamiestyle}
\newtheorem{thrm}{Theorem}[section]
\newtheorem{lemm}[thrm]{Lemma}
\newtheoremstyle{jamienfstyle}%
  {4pt}%
  {0pt}%
  {\normalfont}%
  {0pt}%
  {\sc}%
  {.}%
  { }%
  {}%
\theoremstyle{jamienfstyle}
\newtheorem{nttn}[thrm]{Notation}
\newtheorem{defn}[thrm]{Definition}
\newtheorem{xmpl}[thrm]{Example}
\newcommand\jamiesection[1]{\section{#1}}
\newcommand\jamiesubsection[1]{\subsection{#1}}
\newcommand\tvT{{\mathbf t}}
\newcommand\tvF{{\mathbf f}}
\newcommand\opens{\tf{Open}}
\newcommand{\dotarrow}{%
   \mathrel{\ooalign{\hss\raise.85ex\hbox{\scalebox{1.25}{\normalfont .}}%
   \kern0.35ex\hss\cr$\rightarrow$}}}
\newcounter{fnmarkcntr}\newcounter{fntextcntr}
\renewcommand{\footnotemark}{%
   \@ifnextchar[\@xfootnotemark
     {\stepcounter{fnmarkcntr}%
      \refstepcounter{footnote}\label{footnotemark\thefnmarkcntr}%
      \protected@xdef\@thefnmark{\thefootnote}%
      \@footnotemark}}
\LetLtxMacro{\oldfootnotetext}{\footnotetext}%
\renewcommand{\footnotetext}[1]{%
  \stepcounter{fntextcntr}%
  \oldfootnotetext[\ref{footnotemark\thefntextcntr}]{#1}%
}
\newcommand\onlineref[2]{\url{#1} (permalink: \url{#2})}
\newcommand\footnoteref[2]{\footnote{See \onlineref{#1}{#2}.}}
\newcommand\intertwinedwith{\mathrel{\between}}
\newcommand\@deffont[2][]{{\bfseries #2}\index{#1}}
\newcommand\deffont{\@dblarg\@deffont}
\newcommand\powerset{\f{pow}}
\newcommand\f[1]{\mathit{#1}}
\newcommand\tf[1]{\mathsf{#1}}
\newcommand\ns[1]{\bm{\mathsf{#1}}}
\newcommand\limp{\Longrightarrow}
\DeclareMathSymbol{\shortminus}{\mathbin}{AMSa}{"39}
\newcommand\minus{{\shortminus}}
\newcommand\plus{{+}}
\newcommand\Forall[1]{\forall #1.}
\newcommand\cent{\vdash}
\DeclareRobustCommand{\barcent}{\mathbin{\mathpalette\barcent@@\relax}}
\newcommand{\barcent@@}[2]{%
  \vbox{\offinterlineskip
    \sbox\z@{$\m@th#1\cent$}%
    \ialign{%
      \hfil##\hfil\cr
      $\m@th#1{}_{\minus}\kern-\scriptspace$\cr
      \noalign{\kern-.3\ht\z@}
      \box\z@\cr
    }%
  }%
}
\def\pmb@#1#2{\setbox8\hbox{$\m@th#1{#2}$}%
  \setboxz@h{$\m@th#1\mkern-.1mu$}\pmbraise@\wdz@
  \binrel@{#2}%
  \dimen@-\wd8 %
  \binrel@@{%
    \mkern-.1mu\copy8 %
    \kern\dimen@\mkern-.2mu\copy8 %
    \kern\dimen@\mkern-.3mu\copy8 %
    \kern\dimen@\mkern-.4mu\copy8 %
    \kern\dimen@\mkern.1mu\copy8 %
    \kern\dimen@\mkern.2mu\copy8 %
    \kern\dimen@\mkern.3mu\copy8 %
    \kern\dimen@\mkern.0mu\raise\pmbraise@\copy8 %
    \kern\dimen@\mkern.4mu\box8 %
           }%
}
\newcommand{\circlearrow}{}%
\DeclareRobustCommand{\circlearrow}{%
  \mathrel{\vphantom{\shortrightarrow}\mathpalette\circle@arrow\relax}%
}
\newcommand{\circle@arrow}[2]{%
  \m@th
  \ooalign{%
    \hidewidth$#1\circ\mkern1mu$\hidewidth\cr
    $#1\longrightarrow$\cr}%
}
\newcommand*\bigcdot{\mathpalette\bigcdot@{.5}}
\newcommand*\bigcdot@[2]{\mathbin{\vcenter{\hbox{\scalebox{#2}{$\m@th#1\bullet$}}}}}
\title{Decentralised collaborative action: cryptoeconomics in space} 
\author{Murdoch J. Gabbay}
\begin{document}

\renewcommand{\today}{}
\maketitle

\pagenumbering{arabic}
\setcounter{page}{1}

\subsection*{Abstract}
Blockchains and peer-to-peer systems are part of a trend towards computer systems based on \emph{decentralised collaborative action}, by which we mean that they \emph{1)} run across many participants, \emph{2)} without central control, and \emph{3)} are such that qualities~1 and~2 are essential to the system's intended use cases.

We propose a notion of topological space, which we call a \emph{semitopology}, to help us mathematically model such systems.
We treat participants as \emph{points} in a space, which are organised into \emph{actionable coalitions}.
An actionable coalition is any set of participants who collectively have the resources to collaborate (if they choose) to progress according to the system's rules, independently of the rest of the system.

Mathematicians will recognise semitopology as a generalisation of the notion of point-set topology, where actionable coalitions correspond to open sets.

It turns out that much useful information about the system can be obtained \emph{just} by viewing it as a semitopology and studying its actionable coalitions. 
For example: we will prove a mathematical sense in which if every actionable coalition of some point $p$ has nonempty intersection with every actionable coalition of another point $q$ --- note that this is the negation of the famous Hausdorff separation property from topology --- then $p$ and $q$ must remain in agreement.
Remarkably, since this observation depends only on the semitopological structure, it holds for any possible concrete algorithm.

This matters because remaining in agreement is a key correctness property in many distributed systems.
For example in blockchain, participants disagreeing is called \emph{forking}, and blockchain designers try hard to avoid it. 

We provide an accessible introduction to: the technical context of decentralised systems; why we build them and find them useful; how they motivate the theory of semitopological spaces; and we sketch some basic theorems and applications of the resulting mathematics.

\newpage
\jamiesection{What is `cryptoeconomics'?}
\label{sect.intro}

Let us begin by proposing a definition:
\begin{defn}
\label{defn.c}
\deffont{Cryptoeconomics} is:
\begin{quote}
the study of socioeconomic systems enabled by modern decentralised computer systems. 
\end{quote}
\end{defn}
Cryptoeconomics is closely related to blockchains, because a blockchain is a decentralised database, and databases store \emph{state} (i.e. data): thus, blockchains make cryptoeconomics possible, and cryptoeconomic outcomes --- real or envisaged --- make blockchains useful. 

There is substantial overlap between cryptoeconomics and many other fields that study aspects of decentralised and distributed systems, including:
\begin{itemize*}
\item
\emph{economics} (the study of value and incentives), 
\item
\emph{game theory} (the study of optimal outcomes for players with choices), 
\item
\emph{social choice theory} (how to synthesise collective decisions from individual votes), 
\item
\emph{blockchains} (decentralised databases), 
\item
\emph{smart contracts} (programs that operate on blockchains), and
\item
\emph{law} (the interpretation of facts into socially-binding meaning).
\end{itemize*}
We see from this list that in practice, cryptoeconomics touches on nearly everything.

The description of cryptoeconomics in Definition~\ref{defn.c} uses two adjectives: `digital' and `decentralised'.
Digitisation is important because it enables decentralisation on a historically unprecedented scale, but it is the decentralisation that is most important to giving cryptoeconomics its particular character.

Concerning digitisation, 
this has been ongoing since (at least) the digital \emph{mainframe computers} of the 1960s ---
powerful central computers that acted then, and still act today, as oracles to enable organisations that can afford them to deliver services more efficiently.
Note however that this is invisible to the end user, in the sense that the user just sees a bigger, better, faster, and possibly cheaper service.
This is a substantive technological advance, and great for efficiency and profits, but it does not necessarily lead to any qualitative structural change in the economics of how the value is created.

We can date the seeds of \emph{decentralised} digitisation to the mid-1970s, when pocket digital calculators took over from slide rules, and desktop computers became available.
Putting digital computation in people's pockets and on people's desks started a cascade of innovations that, along with the internet, has brought us modern miracles like mobile banking and streaming video.
Yet even so, a user doing mobile banking on a mobile phone or watching a video on a streaming service has something fundamental in common with the technician querying a 1960s mainframe computer using a paper card with holes punched in it: the back-end system is still centrally controlled.
That mainframe computer system is still around, albeit in an immensely more sophisticated form.\footnote{In practice, the modern `mainframes' that (for example) serve banking, social media, or streaming services, are usually distributed clusters of servers.  But being distributed is not the same as being decentralised: see Definition~\ref{defn.parallel}(1).}

What makes the new breeds of modern computing systems uniquely different is that they are \emph{radically decentralised} and heterogeneous, such that \emph{if they were centrally controlled then they would not even make sense}.
This is the start of cryptoeconomics as intended in Definition~\ref{defn.c}, and it is new.

It began with Napster (a peer-to-peer filesharing system) in June 1999, which demonstrated how music media could be disseminated independently of (centralised) media companies~\cite{David2016}.\footnote{The article itself illustrates some of the compromises involved in creating and disseminating knowledge.
The article's author is an Associate Professor at a UK University.
The publisher's version is behind a paywall (since the publisher makes money from publishing); an author's preprint is made freely online by his University (which makes money by employing the author to educate students); and the author most likely wrote the article in-between teaching obligations, for zero marginal cost to his employer (i.e. `for free'). 

To be fair: the publisher's version looks nicer than the author's preprint, and the publisher's website makes the work easy to find; the University survives and its students get taught by a well-informed professor; the author's preprint is accessible to any reader who can dig it out; and the author enjoyed writing the article.
In this sense, all parties --- the author, the publisher, the students, the university, the article's readership, and science itself --- benefit from the shambling compromise that is academic publishing, though they might also all complain about the division of that benefit.}
Then Bitcoin (January 2009) and Ethereum (July 2015) changed everything by showing that money and even contracts could be mediated (albeit imperfectly) independently of banks.

These systems --- of which Napster, Bitcoin, and Ethereum are perhaps the best known, but certainly not the only examples --- would be meaningless and make no sense if they were implemented in a centralised manner, in much the same way that a pocket calculator or desktop computer would make no sense if it were a mainframe.
The decentralisation is not a quality of the thing: it is the \emph{point} of that thing.

This tendency towards radical decentralisation is often discussed in ideological terms, but compelling technical forces also exist to push technology in this direction.
Being decentralised gives desirable properties, including: scalability, redundancy, reliability, and resilience.
Being decentralised offers unique opportunities for a network of participants to \emph{act collaboratively} to create value by achieving their goals.%
We can sum this up as follows:
$$
\text{cryptoeconomics} 
\quad = \quad
\text{value} + \text{decentralised collaborative action} .
$$
We are only beginning to get to grips with the implications of this equation.

\jamiesection{What is `decentralised collaborative action'?}
\label{sect.what.is}

\jamiesubsection{The definition}

\emph{Decentralised collaborative action} is a feature of \emph{decentralised permissionless heterogeneous computing systems}.
Let's unpack the jargon:
\begin{defn}
\label{defn.parallel}
\leavevmode
\begin{enumerate*}
\item
A system is \emph{decentralised} when it is \emph{distributed} (meaning that it is composed of several distinct parts), and furthermore the system as a whole is not centrally controlled.

Most blockchain systems and peer-to-peer networks are decentralised in this sense. 
The internet is also (mostly) decentralised, at least in principle.\footnote{The internet was designed to be an information network that would be resilient to nuclear attack.  It did this by being `centrifugal'; emphasising node-to-node actions instead of centre-to-centre actions.  See~\cite{ryan:hisidf}, summarised by Ars Technica~\cite{ars-technica:howabg}.

Note that the boundary between `distributed' and `decentralised' can be fluid.  For example, should we consider a system to be decentralised when its parts can act independently most of the time, but every so often they check in with a central controller?  This depends on what aspects of the system we care about; e.g. its short-term or long-term behaviour. There is room for a nice discussion here, but it will not be in this particular article.} 
\item
A system is \emph{permissionless} (or \emph{unpermissioned}) when participants can leave and join the system at any time.

Nature is naturally permissionless (living things do not need permission to be born or die).
National voting systems \emph{are} permissioned (because citizens require government certification to be allowed to vote).
Peer-to-peer systems (including filesharing and blockchain protocols) are often, though not always, unpermissioned. 
\item
A system is \emph{heterogeneous} when participants may legitimately be following different rules.\footnote{By `different rules' we include the situation where an algorithm (such as a consensus algorithm) is agreed between participants but a critical parameter may vary substantively across them, e.g. imagine a blockchain in which some participants require a ${>}2/3$ majority to act, and others require just a ${>}1/2$ majority. By `legitimate' we exclude the case of a hostile participant.}

Ethereum and Tezos are decentralised and permissionless, but they are not heterogeneous in the sense we intend.
If you are running a Tezos or Ethereum node, then you are not forced to follow the rules, but if you do not then by definition you are not acting as a legitimate Tezos or Ethereum node. 

In contrast, consider the combination of Tezos and Ethereum as a single system connected by a \emph{blockchain bridge}.\footnoteref{https://ethereum.org/en/bridges/}{https://web.archive.org/web/20240324090911/https://ethereum.org/en/bridges/}
This is heterogeneous, because Tezos nodes and Ethereum nodes have different rules and different consensus mechanisms.
A Tezos node is not a bad node just because it is not following the rules of Ethereum, and vice-versa, but because of the blockchain bridge, they can be considered to be operating within a single (heterogeneous) combined system.
\end{enumerate*}
\end{defn}
There are many flavours of decentralised system, but in the most general case we have a decentralised heterogeneous permissionless system that consists of \emph{some} participants communicating to do \emph{something}, with no \emph{a priori} restrictions on who, what, or how.

This scenario --- with its weak well-behavedness assumptions that do not even assume all participants share a common ruleset --- might seem a terrible idea which we should not allow, because it admits 
crazy networks with bad behaviour.
But here the generality is a feature, not a bug:
\begin{enumerate*}
\item
Mathematically speaking, it can be \emph{useful} to admit general models, including both good and bad ones, so that we can formalise their good and bad behaviour\footnote{\dots which will vary by application; e.g. sometimes all participants should play by the same rules, but in the case of a blockchain bridge we specifically want to \emph{admit} different rules.} and express conditions to include or exclude it.
\item
Surprisingly, it will turn out that there is still a lot that we can say even about the general case, and we shall see that much useful structure will emerge, even from very weak assumptions. 
\end{enumerate*} 
So granted that the generality of decentralised collaborative action is a feature, not a bug; but how should we approach this mathematical generality?

The key is to look at how groups of participants can \emph{progress (i.e. update) their local state}.
To see this we need a little more discussion.

In a decentralised system, a participant must store local state --- if there were a global source of truth for state then whoever controls that truth would \emph{de facto} control the system --- 
and communicate with other participants to decide on how their local states evolve.
There must be \emph{some} rules about how this state should be updated, even if these rules may differ across participants in the system, and even though the rules may not always be followed. 
It turns out that one common feature of decentralised heterogeneous permissionless systems is a notion of what in~\cite{gabbay:semdca,gabbay:semtad} is called an \emph{actionable coalition},
by which we mean 
\begin{quote}
\emph{a set of participants who are legally entitled (but not obliged) to collaborate to progress and update their local state (possibly but not necessarily in identical ways).}
\end{quote}

\jamiesubsection{Some informal examples from real life}

We will consider some examples of actionable coalitions, taken from real life:
\begin{enumerate*}
\item
Ethereum.

Ethereum's consensus protocol is proof-of-stake, so an actionable coalition on Ethereum is any group of participants who hold a majority stake of tokens (this is a bit of a simplification, but it will do).
\item
Ethereum and Tezos with a blockchain bridge between them.

Tezos's consensus protocol is also proof-of-stake.
An actionable coalition in this system is either an actionable coalition of Ethereum, or one of Tezos, \emph{or} the sets union of an actionable coalition from each, along with the bridging node (again, a simplification, but it will do).\footnote{Typically, participants can update their state if they held a majority of the stake at some time in the past (e.g. two weeks ago) --- the idea being that all participants have reached agreement on, and learned, the state of the network two weeks in the past, so this can be treated as immutable common knowledge without undermining the decentralised nature of the system in the present~\cite[Subsection~3.2.1, final paragraph]{tezos:whitepaper}.} 
We return to this in Example~\ref{xmpl.ET}. 
\item
A Tango dance evening where leaders will only dance with followers and vice-versa.\footnote{Many dancers can both lead and follow, including this author, but for the sake of the mathematics we will simplify.}

An actionable coalition is any set containing equal numbers of leads and followers.
\item
A set of people wishing to lift a heavy rock.

An actionable coalition is any subset of these people who lifting together have enough strength to do so. 
\end{enumerate*}
If an actionable coalition can communicate to agree on a set of local state updates, e.g. if the Tango lead leads a move and the Tango follower chooses to follow it, then the participants in this coalition are entitled to update their local states accordingly.
Note that local state updates need not be literally identical across participants; they just need to be mutually agreed upon and then actioned.

Some important notes:
\begin{enumerate*}
\item
The actionable coalition can progress \emph{without} consulting the rest of the system.
\item
Being in an actionable coalition does not imply control.
This set describes a potential legal collaboration, but participants can choose what actionable coalition to work with, if any, and they can also choose not to follow the rules.\footnote{If you put your elbow into your dance partner's eye, or simply deliver a poor lead or a poor follow, then the other dancer might stop dancing with you or turn you down if you ask for another dance.
But neither of you are \emph{compelled} to dance with one another, and if you do, you are not \emph{compelled} to dance well.}
\item 
If $O$ is an actionable coalition for some participant $p$, and $p'$ is another participant in $O$, then $O$ is also an actionable coalition for $p'$.
Note that this makes actionable coalitions look a bit like open sets in a topology.
\end{enumerate*}
So we can now introduce our first mathematical abstraction: we identify participants as \emph{points}, and we let \emph{open sets} be \emph{actionable coalitions}.
An actionable coalition is a \emph{coalition of participants with the capacity to act}.
They are not obliged to act, and if they do act then their action need not be identical across all participants, but the potential exists for this set to collaborate to progress their states.
\begin{enumerate*}
\item
With reference to our couples dance example:
an example of an actionable coalition that is not minimal is a set containing two leads and two followers.
There are two ways for the participants to pair off to collaborate (i.e. dance).
\item
With reference to our bridged blockchain example: an example of a set that contains an actionable coalition but is not one itself is an actionable coalition from Ethereum, along with the bridging node.
The Ethereum coalition on its own is actionable, but the bridging node cannot take any action without also collaborating with an actionable coalition from Bitcoin.
\end{enumerate*} 

To get a flavour of our mathematical results, consider a fundamental problem in any decentralised system: \emph{consensus}; i.e. the problem of ensuring that participants remain in agreement, for some suitable sense of `agree'.
To take a simple example from blockchain: if we reach a situation where half of the nodes say that we have paid for a service, and the other half say that we have not --- then \emph{everyone} has a problem, because the system has become incoherent and it is not clear how the system as a whole can restore coherence and progress.\footnote{coherent (adj.) 1550s, ``harmonious;'' 1570s, ``sticking together,'' also ``connected, consistent'' (of speech, thought, etc.), from French cohérent (16c.), from Latin cohaerentem (nominative cohaerens), present participle of cohaerere ``cohere,'' from assimilated form of com ``together'' (see co-) + haerere ``to adhere, stick'' (etymologyonline: \url{https://www.etymonline.com/word/coherent}).}
This phenomenon is called \emph{forking}, and blockchain designers really want to avoid it!

We will call our mathematical abstraction of agreement, \emph{antiseparation}.
In a little more detail, antiseparation properties are coherence properties that are guaranteed to hold of a decentralised system 
\emph{just} by analysing the structure of its actionable coalitions.
It turns out that we can get surprisingly detailed information about agreement/antiseparation properties, even working from quite weak and abstract mathematical assumptions on the actionable coalitions.

We emphasise this point: sometimes we can predict important macro properties of a system's behaviour without knowing anything about its specifics, so long as we have certain good properties on its actionable coalitions.

\jamiesubsection{Two formal mathematical examples}

\begin{defn}
\label{defn.binary.consensus}
Call \deffont{binary consensus} the problem of getting participants in a distributed system to announce a single value $\tvT$ or $\tvF$.
This is a simplest possible consensus problem, but note that by running multiple rounds of binary consensus we can get participants to announce \emph{bitstrings} (finite sequences of values), and arbitrary data can be serialised to bitstrings, so this consensus problem --- simple as it is --- is also complete for all data in a suitable sense.
\end{defn} 

\begin{xmpl}[A simple majority system]
\label{xmpl.majority}
We consider a simple situation where participants are trying to solve binary consensus.
Continuing the theme of simplicity, assume some finite nonempty set of participants $\ns M$ and let their actionable coalitions be just any set of participants that forms a majority (so it contains strictly more than half of $\ns M$ the set of all participants).
Now suppose that the participants in some actionable coalition $O\subseteq\ns M$ ($O\subseteq\ns M$ means that $O$ is a set of elements from $\ns M$) have communicated and have progressed to agree on the value $\tvT$. 
Because they form an actionable coalition, they are entitled to act and to announce $\tvT$, and so they do.
They have now all committed to this state update and they cannot change their minds.

So: can this system fork?
Consider some participant $p\not\in O$ ($p\not\in O$ means that $p$ is a participant that is not in $O$).
If $p$ wants to progress to decide on some value, that value must be $\tvT$.
This is because all of its actionable coalitions intersect with $O$, and so they contain at least one participant that has already committed to $\tvT$ and cannot change its mind.

This does not mean that $p$ has to agree on $\tvT$; it could choose not to agree with anything and not progress (i.e. not update its local state with any value), or it could break the rules.
But, by definition if $p$ does want to made a decision legally, then the decision has been made and it must eventually go along with the majority.
Thus, we have proved that any progress that is made by one participant within the rules (\dots must be shared with some actionable coalition of that participant, and since all such coalitions intersect it \dots) must eventually be followed any other participant that also progresses within the rules.
Thus forking is impossible.
\end{xmpl}

The reader may already be familiar with Example~\ref{xmpl.majority}, but note that this antiseparation property comes simply \emph{from the structure of the actionable coalitions}.
There is no need to consider the protocol, or even how values are interpreted.
It turns out that antiseparation-style behaviour is common, and arises even if we do not require actionable coalitions that are simple majorities.
For example:
\begin{xmpl}
\label{xmpl.Z}
Let participants be $\mathbb Z=\{0,1,\minus 1,2,\minus 2,\dots\}$ and let actionable coalitions be generated by sets of three consecutive numbers starting at an even number $\{2i,2i\plus 1,2i\plus 2\}$ --- so for example $\{0,1,2\}$ and $\{2,3,4\}$ are actionable coalitions, but not $\{2\}$ or $\{1,2,3\}$ --- and suppose again that we are trying to agree on $\tvT$ or $\tvF$.

Note that unlike for Example~\ref{xmpl.majority}, actionable coalitions need not intersect.
Yet, once one triplet of participants commits to $\tvT$, the rest of the system is obliged to eventually agree, if all participants play by the rules.
Now this example system is not necessarily particularly safe or desirable in practice, because we can imagine that $\{0,1,2\}$ agree on $\tvT$, and $\{4,5,6\}$ acting independently but in good faith agree on $\tvF$, and then $3$ cannot legally progress, because within $\{2,3,4\}$, $2$ has announced $\tvT$ and $4$ has announced $\tvF$ and $3$ cannot agree with both.
But, we know that \emph{if} all participants do legally progress, then they announce the same value.
So this example illustrates how antiseparation can arise even when actionable coalitions are rather small.
\end{xmpl}

The two examples above are quite different.
In one, all actionable coalitions intersect, and in the other they mostly do not.
This suggests that a `general mathematics of (anti)separation' is possible, based on the study of actionable coalitions.
In a nutshell, that mathematical story is what we will develop.

\jamiesection{What is a semitopology?}

\jamiesubsection{The definition}

So at a high level, what do we have?
Points are synonymous with participants, and:
\begin{enumerate*}
\item
There is a notion of an \emph{actionable coalition} (or just: \emph{open set}).
This is a set $O\subseteq\ns P$ of participants with the capability, though not the obligation, to act collaboratively to advance (= update / transition) the local state of the elements in $O$, possibly but not necessarily in the same way for every $p\in O$. 
\item
The empty set $\varnothing$ (containing no points) is trivially an actionable coalition.
Also we assume that $\ns P$ (containing all the points) is actionable, effectively assuming that every point is a member of at least one actionable coalition.
\item
A sets union of actionable coalitions, is an actionable coalition.
\end{enumerate*}
This leads us to the definition of a semitopology.

\begin{defn}
\label{defn.semitopology}
Suppose $\ns P$ is a set.
Write $\powerset(\ns P)$ for the powerset of $\ns P$ (the set of subsets of $\ns P$).
Then a \deffont{semitopological space}, or \deffont{semitopology} for short, consists of a pair $(\ns P, \opens(\ns P))$ of 
\begin{itemize*}
\item
a (possibly empty) set $\ns P$ of \deffont{points}, and 
\item
a set $\opens(\ns P)\subseteq\powerset(\ns P)$ of \deffont[open sets $\opens$]{open sets}, 
\end{itemize*}
such that:
\begin{enumerate*}
\item\label{semitopology.empty.and.universe}
$\varnothing\in\opens(\ns P)$ and $\ns P\in\opens(\ns P)$.

In words: the empty set of points, and the set of all points, are both open sets.
\item\label{semitopology.unions}
If $X\subseteq\opens(\ns P)$ then $\bigcup X\in\opens(\ns P)$.\footnote{There is a little overlap between this clause and the first one: if $X=\varnothing$ then by convention $\bigcup X=\varnothing$.  Thus, $\varnothing\in\opens(\ns P)$ follows from both clause~1 and clause~2.  If desired, the reader can just remove the condition $\varnothing\in\opens(\ns P)$ from clause~1, and no harm would come of it.} 

In words: a sets union of open sets, is an open set.
\end{enumerate*}
We may write $\opens(\ns P)$ just as $\opens$, if $\ns P$ is irrelevant or understood. %
\end{defn}

We recognise a semitopology as being like a \emph{topology}~\cite{engelking:gent,willard:gent}, but without the condition that the intersection of two open sets necessarily be an open set.
This reflects the fact that the intersection of two actionable coalitions need not itself be an actionable coalition.

\begin{nttn}
Suppose $X$ and $X'$ are sets.
Then write $X\between X'$ when $X$ and $X'$ are not disjoint; i.e. they have a nonempty sets intersection: $X\cap X'\neq\varnothing$. 
\end{nttn}

We can now state a key definition:
\begin{defn}
Suppose $(\ns P,\opens)$ is a semitopology and $p,p'\in\ns P$.
Then:
\begin{enumerate*}
\item
Write $p\intertwinedwith p'$ and call $p$ and $p'$ \deffont{intertwined} when
$$
\Forall{O,O'\in\opens}p\in O\land p'\in O' \limp O\between O' 
$$
In words: $p$ and $p'$ are intertwined when they have no pair of disjoint open neighbourhoods.
\item
Write $p\intertwinedwith^* p'$ and call $p$ and $p'$ \deffont{transitively intertwined} when they are related by the transitive closure of $\intertwinedwith$; thus there exists some (possibly zero length) chain $p_0,p_1,\dots,p_{n\minus 1},p_n\in\ns P$ such that 
$$
p=p_0\intertwinedwith p_1\dots p_{n\minus 1}\intertwinedwith p_n=p'.
$$
\end{enumerate*}
\end{defn}

This recalls the \emph{Hausdorff separation} property, typical in topology, that any two distinct points should have a disjoint pair of open neighbourhoods.
As we shall see from Theorem~\ref{thrm.1}, 
for the study of consensus we are particularly interested in semitopologies with antiseparation properties, of which being intertwined is a canonical such property; and it is the precise negation of being Hausdorff separated.

\jamiesubsection{Two mathematical results}
\label{subsect.two.mathematical.results}

Recall the notion of \emph{binary consensus} from Definition~\ref{defn.binary.consensus}.
Now that we have built some mathematical machinery, we can represent binary consensus as the problem of defining a function $f:\ns P\to\{\tvT,\tvF\}$.
We will call such a function a \deffont{value assignment}.

If the value assignment is constant (so it maps all points to just one value) then this represents system-wide consensus across all of $\ns P$.

Call a value assignment \deffont{continuous} at $p\in\ns P$ when there exists an open neighbourhood $p\in O\in\opens$ such that $\Forall{p'\in O}f(p)=f(p')$.
The reader can check that this coincides with the usual notion of topological continuity, if we give $\{\tvT,\tvF\}$ the discrete topology (so $\{\tvT\}$ and $\{\tvF\}$ are open sets); a proof is in~\cite[Lemma~2.2.4, page~22]{gabbay:semdca}.
It also coincides with our intuition that if $p$ declares some value, then it must do so in collaboration with an actionable coalition of other participants.  
Thus we can write
\begin{quote}
consensus = continuity,
\end{quote}
and we can prove:
\begin{thrm}
\label{thrm.1}
Suppose $(\ns P,\opens)$ is a semitopology and $p,p'\in\ns P$, and suppose $f:\ns P\to\{\tvT,\tvF\}$ is a value assignment.
Then:
\begin{enumerate*}
\item
If $p$ and $p'$ are intertwined and $f$ is continuous at $p$ and $p'$, then $f(p)=f(p')$.
\item
If $p$ and $p'$ are transitively intertwined and $f$ is continuous at all points, then $f(p)=f(p')$.
\end{enumerate*}
\end{thrm}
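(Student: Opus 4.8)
The plan is to prove part~1 directly from the definitions of continuity and of intertwinedness, and then to obtain part~2 by an easy induction on the length of the intertwining chain, applying part~1 to each link.

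For part~1, suppose $p\intertwinedwith p'$ and that $f$ is continuous at both $p$ and $p'$. Unfolding continuity at $p$, I would fix an open neighbourhood $p\in O\in\opens$ on which $f$ is constant, so that $f(q)=f(p)$ for every $q\in O$; similarly fix $p'\in O'\in\opens$ with $f(q)=f(p')$ for every $q\in O'$. Now I invoke the hypothesis $p\intertwinedwith p'$: instantiating its defining universal statement at these particular $O$ and $O'$ yields $O\between O'$, i.e.\ $O\cap O'\neq\varnothing$. Choosing any witness $q\in O\cap O'$, the two constancy facts give $f(p)=f(q)=f(p')$, which is the desired conclusion.

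For part~2, suppose $p\intertwinedwith^* p'$ and that $f$ is continuous at every point. By definition of transitive intertwining there is a chain $p=p_0\intertwinedwith p_1\intertwinedwith\dots\intertwinedwith p_n=p'$, and I would induct on $n$. When $n=0$ we have $p=p'$ and there is nothing to prove. For the inductive step, the initial segment $p_0\intertwinedwith\dots\intertwinedwith p_{n\minus 1}$ gives $f(p_0)=f(p_{n\minus 1})$ by the induction hypothesis (continuity at all points is of course inherited), while the final link $p_{n\minus 1}\intertwinedwith p_n$ together with continuity of $f$ at $p_{n\minus 1}$ and at $p_n$ gives $f(p_{n\minus 1})=f(p_n)$ by part~1; chaining these equalities yields $f(p)=f(p_0)=f(p_n)=f(p')$.

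There is no real obstacle here: the argument is essentially a two-line unfolding of the definitions, and the only point needing a moment's care is the bookkeeping in part~1 --- namely that continuity at $p$ yields a neighbourhood on which $f$ is not merely locally constant but takes the value $f(p)$ throughout, which is immediate since $f(p)$ is the value $f$ assumes at the point $p$ of that neighbourhood. The content of the theorem is conceptual rather than technical: it records that the negation of Hausdorff separation (being intertwined) is precisely the obstruction to two points carrying different values under a continuous --- that is, ``legally progressed'' --- value assignment.
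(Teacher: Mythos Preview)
Your proof is correct and follows essentially the same approach as the paper: for part~1 you unfold continuity to get constant-value neighbourhoods and use intertwinedness to force them to intersect, and for part~2 you iterate part~1 along the chain. The only cosmetic difference is that you phrase part~2 as a formal induction on~$n$ while the paper simply chains the equalities $f(p_0)=f(p_1)=\dots=f(p_n)$ directly.
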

\begin{proof}
For part~1, suppose $f$ is continuous at $p$ and $p'$, and suppose $p\intertwinedwith p'$.
Then by assumption there exist open neighbourhoods $p\in O\in\opens$ and $p'\in O'\in\opens$ such that $f$ is constant on $O$ and on $O'$.
Since $O$ and $O'$ intersect, $f(p)=f(p')$.

For part~2, suppose $f$ is continuous at every point, and suppose $p\intertwinedwith^* p'$.
By assumption there exists a chain of intertwinedness relations $p=p_0\intertwinedwith p_1\dots p_{n\minus 1}\intertwinedwith p_n=p'$, and also by assumption $f$ is continuous at each of these points.
By part~1 of this result $f(p)=f(p_0)=f(p_1)=\dots=f(p_{n\minus 1})=f(p_n)=f(p')$.
\end{proof}

Simple as Theorem~\ref{thrm.1} is, it explains the consensus behaviour we observed of Examples~\ref{xmpl.majority} and~\ref{xmpl.Z}, via the following easy Lemma:
\begin{lemm}
All points in the semitopology in Example~\ref{xmpl.majority} are intertwined.
All points in the semitopology in Example~\ref{xmpl.Z} are transitively intertwined.
\end{lemm}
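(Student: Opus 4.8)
The plan is to handle the two semitopologies separately, since in Example~\ref{xmpl.majority} every pair of points turns out to be directly intertwined, whereas in Example~\ref{xmpl.Z} we must pass through intermediate points. For Example~\ref{xmpl.majority} I would argue by counting. First note that every \emph{nonempty} open set $O\subseteq\ns M$ contains a majority: an open set is a union of majority sets, and if it is nonempty it is a union of at least one such set, so $|O|>|\ns M|/2$. Now take any $p,p'\in\ns M$ and any open neighbourhoods $p\in O$ and $p'\in O'$. Both are nonempty, so $|O|+|O'|>|\ns M|$, and by inclusion--exclusion $|O\cap O'|\geq|O|+|O'|-|\ns M|>0$; hence $O\between O'$. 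Since $O$ and $O'$ were arbitrary, $p\intertwinedwith p'$. The only thing needing a moment's care here is the edge behaviour of ``strictly more than half'', but the decisive inequality $|O|+|O'|>|\ns M|$ holds regardless of the parity or size of the finite set $\ns M$.

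For Example~\ref{xmpl.Z} I would first record the standard fact that an open set is precisely a union of the generating (basic) sets $\{2i,2i\plus 1,2i\plus 2\}$, so that if an open set $O$ contains an integer $n$ then $O$ contains some basic set containing $n$. The key combinatorial observation is that an \emph{odd} integer $2i\plus 1$ lies in exactly one basic set, namely $\{2i,2i\plus 1,2i\plus 2\}$ (a one-line check on the inequalities $2i\leq n\leq 2i\plus 2$ with $n$ odd). Consequently every open neighbourhood of $2i\plus 1$ contains both $2i$ and $2i\plus 2$. Hence for any open $O\ni 2i\plus 1$ and any open $O'\ni 2i$ we have $2i\in O\cap O'$, so $2i\plus 1\intertwinedwith 2i$; and symmetrically $2i\plus 1\intertwinedwith 2i\plus 2$.

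Finally I would chain these relations: from the previous paragraph $2i\intertwinedwith 2i\plus 1\intertwinedwith 2i\plus 2$, so consecutive even integers are transitively intertwined, and by composing such steps every even integer is transitively intertwined with $0$; moreover every odd integer $2i\plus 1$ is intertwined with $2i$, which is transitively intertwined with $0$. Transitivity of $\intertwinedwith^*$ then gives that all points are transitively intertwined. I do not expect a genuine obstacle: the two things worth double-checking are the uniqueness of the basic open set through an odd integer, and (for tidiness) the observation that the empty open set is never a witness in the definition of intertwinedness since it contains no point, so it causes no trouble.
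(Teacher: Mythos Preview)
Your proposal is correct and follows exactly the approach the paper intends: the paper's own ``proof'' is simply ``Left to the reader to check'' together with the hint that $2i\plus 1$ is intertwined with $2i$ and $2i\plus 2$, which is precisely the key observation you isolate and justify. Your treatment of Example~\ref{xmpl.majority} via the pigeonhole inequality $|O|+|O'|>|\ns M|$ and your careful check that an odd integer lies in a unique basic open set are both sound and complete the argument as the paper envisages.
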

\begin{proof}
Left to the reader to check.\footnote{Hint: {\tiny check that $2i\plus 1$ is intertwined with $2i$ and $2i\plus 2$.}} 
\end{proof}

\begin{xmpl}
\label{xmpl.ET}
Consider our previous example of Ethereum and Tezos, connected by a bridging node.
What would this look like in abstract semitopological terms?

Assume two semitopologies $(\ns E,\opens(\ns E))$ and $(\ns T,\opens(\ns T))$ such that (for simplicity) $\ns E\cap\ns T=\varnothing$.
Assume some other point $r\notin\ns E\cup\ns T$, which we will call the \deffont{bridging node}.
Then define a semitopology $(\ns B,\opens(\ns B))$ by:
\begin{itemize*}
\item
$\ns B=\ns E\cup\{r\}\cup\ns T$.
\item
$\opens(\ns B)$ is the closure under arbitrary unions of 
$$
\opens(\ns E)\cup\opens(\ns T)\cup \{O\cup\{r\}\cup O' \mid O\in\opens(\ns E),O'\in\opens(\ns T)\} .
$$
\end{itemize*}
Intuitively, a quorum in the combined system is either a quorum from $\ns E$, or one from $\ns T$, or it is a pair of quorums along with the bridging node $r$.
The key point about this structure is that the bridging node can only make progress if it is in consensus with some quorum from $\ns E$ \emph{and} at the same time some quorum from $\ns T$.
This is just what we would expect a bridging node to do.

Now suppose that $(\ns E,\opens(\ns E))$ and $(\ns T,\opens(\ns T))$ are intertwined --- which is what we would hope, since this indicates a pair of blockchains that will not fork.
Then $(\ns B,\opens(\ns B))$ is \emph{transitively} intertwined, via the bridging node.
\end{xmpl}

\jamiesection{Conclusions and open questions}

\jamiesubsection{Overview}

We have discussed \emph{semitopology}, a generalisation of point-set topology that removes the restriction that intersections of open sets need necessarily be open.
The intuition is that points represent participants in a decentralised system, and open sets represent collections of participants that collectively have the authority to collaborate to update their local state; we call this an \emph{actionable coalition}.

Examples of actionable coalition include: majority stakes in proof-of-stake blockchains; communicating peers in peer-to-peer networks; and even pedestrians working together to not bump into one another in the street.
Where actionable coalitions exist, they have in common that: collaborations are local (updating the states of the participants in the coalition, but not immediately those of the whole system); collaborations are voluntary (up to and including breaking rules); participants may be heterogeneous in their computing power or in their goals (not all pedestrians want to go to the same place); participants can choose with whom to collaborate; and they are not assumed subject to permission or synchronisation by a central authority.

These decentralised systems can be very complex, and without a centralised authority to control behaviour, it is not immediately obvious why they should display order.
Semitopologies are a topology-flavoured mathematics that goes some way to explaining how and in what circumstances they can behave well. 
Semitopology is also interesting in and of itself, having a rich and interesting theory --- one which quickly deviates from standard accounts on topological spaces, because the most interesting semitopologies are rather ill-behaved from the usual viewpoint, as their antiseparation properties mean that they are not Hausdorff.
Various antiseparation properties --- of which we have considered intertwined / transitively intertwined here, but there are many more --- becomes central to the story, as they define participants who should decide the same value in a distributed system that tries to achieve consensus.

It is possible to construct a quite extensive theory of semitopological space based on these ideas~\cite{gabbay:semdca,gabbay:semtad}, and to relate these results back to practical systems in ways that are not entirely obvious, including:
\begin{enumerate*}
\item
It can be proved that \emph{any} semitopology partitions into disjoint components whose points are pairwise intertwined within each component.
This goes some way to explaining why blockchains tend to exhibit order~\cite[Theorem~3.5.4, page~32]{gabbay:semdca}.
\item
It can be proved that every semitopology has an actionable \emph{kernel} of participants, such that if they make a decision then all other participants must follow~\cite[Corollary~11.6.11, page~152]{gabbay:semdca}.
This can be read as a distributed systems version of Arrow's theorem~\cite{fey:strpat} (though the proof is different).\footnote{Arrow's theorem proves that dictators exist; the semitopological result, for decentralised systems, is that dictator-\emph{sets} exist.  So the question is: is the dictator-set small relative to the size of the whole semitopology? If so, then this is a measure that the system may be more centralised than desired.}
\item
Semitopological logics can be constructed and used to analyse intertwinedness properties of semitopologies (as documented in~\cite[Chapter~20]{gabbay:semdca}).
\item
In ongoing work, we are using these logics used to formally specify, reason about, and debug consensus protocols.
\end{enumerate*}

\jamiesubsection{Future work}

Semitopologies invite many questions.
We mention just a few:
\begin{enumerate*}
\item
What are the natural semitopological notions of path and homotopy?
\item
What are natural notions of evolution of a semitopology over time?
We ask because in practice, actionable coalitions are not static; they evolve.
Thus, it is natural to consider `deformations' of a semitopology over time.
\item
Following on from the previous point, suppose we are given a semitopology all of whose points are intertwined --- implying, as per Theorem~\ref{thrm.1}, that all points must agree where algorithms succeed.
How close is this semitopology to one such that \emph{not} all of its points are intertwined --- meaning, as discussed, that forking would be possible even where algorithms succeed?
This is a question that would be of interest, for example, to users managing a blockchain to make sure it evolves safely.
\item
Can logics based on semitopologies be used to accelerate development, and increase confidence in, distributed algorithms, by giving new declarative descriptions of consensus algorithms?

This is current work, and so far it has proven useful: most recently we applied a semitopological modal logic to axiomatise the Paxos consensus algorithm~\cite{gabbay2025declarativeapproachspecifyingdistributed}, and in ongoing research we have used a more advanced version of the logic to formally specify and identify errors in a proposed industrial protocol (\emph{Heterogeneous Paxos}~\cite{sheff:hetp,sheff_et_al:LIPIcs.OPODIS.2020.5}).
We are now using the same techniques to help design its replacement.
\end{enumerate*}

\hyphenation{Mathe-ma-ti-sche}
\providecommand{\bysame}{\leavevmode\hbox to3em{\hrulefill}\thinspace}
\providecommand{\MR}{\relax\ifhmode\unskip\space\fi MR }
\providecommand{\MRhref}[2]{%
  \href{http://www.ams.org/mathscinet-getitem?mr=#1}{#2}
}
\providecommand{\href}[2]{#2}


\begin{thebibliography}{SWvRM21}

\bibitem[Dav16]{David2016}
Matthew David, \emph{The {L}egacy of {N}apster}, Networked Music Cultures:
  Contemporary Approaches, Emerging Issues (Rapha{\"e}l Nowak and Andrew
  Whelan, eds.), Palgrave Macmillan UK, London, 2016, DOI:
  10.1057/978-1-137-58290-4\_4, available online at
  \url{https://durham-repository.worktribe.com/output/1642018/} (permalink:
  \url{https://web.archive.org/web/20250421093331/https://durham-repository.worktribe.com/OutputFile/1642039}),
  pp.~49--65.

\bibitem[Eng89]{engelking:gent}
Ryszard Engelking, \emph{General topology}, Sigma Series in Pure Mathematics,
  Heldermann Verlag, 1989.

\bibitem[Fey14]{fey:strpat}
Mark Fey, \emph{{A straightforward proof of Arrow's theorem}}, Economics
  Bulletin \textbf{34} (2014), no.~3, 1792--1797.

\bibitem[Gab24]{gabbay:semdca}
Murdoch~J. Gabbay, \emph{Semitopology: decentralised collaborative action via
  topology, algebra, and logic}, College Publications, August 2024, ISBN
  978-1848904651.

\bibitem[Gab25]{gabbay:semtad}
\bysame, \emph{Semitopology: a topological approach to decentralised
  collaborative action}, The Journal of Logic and Computation (2025),
  \url{https://doi.org/10.1093/logcom/exae050}.

\bibitem[Goo14]{tezos:whitepaper}
L.~M. Goodman, \emph{Tezos -- a self-amending crypto-ledger (white paper)},
  September 2014, \url{https://tezos.com/whitepaper.pdf}.

\bibitem[GZ25]{gabbay2025declarativeapproachspecifyingdistributed}
Murdoch~J. Gabbay and Luca Zanolini, \emph{A declarative approach to specifying
  distributed algorithms using three-valued modal logic}, 2025,
  \url{https://arxiv.org/abs/2502.00892} (submitted for publication).

\bibitem[Rya10]{ryan:hisidf}
Johnny Ryan, \emph{A history of the internet and the digital future}, Reaktion
  Books, 2010, ISBN 978-1861897770.

\bibitem[Rya11]{ars-technica:howabg}
\bysame, \emph{How the atom bomb helped give birth to the internet},
  \url{https://arstechnica.com/tech-policy/2011/02/how-the-atom-bomb-gave-birth-to-the-internet/},
  2 2011, Permalink:
  \url{http://web.archive.org/web/20240622221756/https://arstechnica.com/tech-policy/2011/02/how-the-atom-bomb-gave-birth-to-the-internet/}.

\bibitem[SWvRM20]{sheff:hetp}
Isaac Sheff, Xinwen Wang, Robbert van Renesse, and Andrew~C. Myers,
  \emph{Heterogeneous {P}axos: Technical report}, 2020,
  \url{https://arxiv.org/abs/2011.08253}.

\bibitem[SWvRM21]{sheff_et_al:LIPIcs.OPODIS.2020.5}
Isaac Sheff, Xinwen Wang, Robbert van Renesse, and Andrew~C. Myers,
  \emph{{Heterogeneous Paxos}}, 24th International Conference on Principles of
  Distributed Systems (OPODIS 2020) (Dagstuhl, Germany) (Quentin Bramas, Rotem
  Oshman, and Paolo Romano, eds.), Leibniz International Proceedings in
  Informatics (LIPIcs), vol. 184, Schloss Dagstuhl -- Leibniz-Zentrum f{\"u}r
  Informatik, 2021, pp.~5:1--5:17.

\bibitem[Wil70]{willard:gent}
Stephen Willard, \emph{General topology}, Addison-Wesley, 1970, Reprinted by
  Dover Publications.

\end{thebibliography}
\end{document}